\numberwithin{equation}{section}
\numberwithin{figure}{section}
  \theoremstyle{plain}
  \newtheorem*{lem*}{Lemma}
\theoremstyle{plain}
\newtheorem{thm}{Theorem}
  \theoremstyle{plain}
  \newtheorem*{cor*}{Corollary}
  \theoremstyle{plain}
  \newtheorem*{thm*}{Theorem}
\begin{document}

\title{Differential geometric formulation of the Cauchy Navier equations }

\date{December 15th, 2009}

\author{Frank Schadt, Pforzheim University of Applied Sciences, Germany}

\maketitle
Key words: differential geometry, differential forms, lie derivatives,
linear elasticity, Cauchy Navier equations 
\begin{abstract}
The paper presents a reformulation of some of the most basic entities
and equations of linear elasticity \textendash{} the stress and strain
tensor, the Cauchy Navier equilibrium equations, material equations
for linear isotropic bodies - in a modern differential geometric language
using differential forms and lie derivatives. Similar steps have been
done successfully in general relativity, quantum physics and electrodynamics
and are of great use in those fields. In Elasticity Theory, however,
such a modern differential geometric approach is much less common.
Furthermore, existing reformulations demand a vast knowledge of differential
geometry, including nonstandard entities such as vector valued differential
forms and the like. This paper presents a less general but more easily
accessible approach to using modern differential geometry in elasticity
theory than those published up to now. 
\end{abstract}

\section{Introduction}

Modern differential geometry often allows for a clearer, more geometric
approach to solve physical problems than Gibbsian vector calculus
or Ricci calculus (i.e. the index-based form) of Riemannian geometry.
Even more importantly, equations stated in modern differential geometric
terms are not bound to the use of special classes of coordinate systems,
such as cartesian or orthonormal ones. It provides useful and powerful
concepts such as differential forms and Lie derivatives and with them
a generalization of vector calculus expressions like rotation, gradient
and divergence and of various integral theorems like Gauss\textquoteright{}
divergence theorem or Stoke\textquoteright{}s integral theorem on
planes to just one integral theorem. Furthermore, coordinate transformations
in curvilinear coordi-nate systems are often simplified. 

In general, it is desirable to express equations in a connection and
metric ten-sor free manner, using only differential forms and Lie
derivatives, thus simplifying coordinate system changes. Unfortunately
it is not always possible to express vector or tensor equations as
equations in such a way. One reason of this is the antisymmetric nature
of differential forms of degree \ensuremath{\ge}2. While e.g. the
gradient of a scalar function and the rotation and divergence of a
vector field can be expressed using differential forms, this is not
possible with the gradient of a vector field . It is, however, possible
to reformulate certain differential equations of vector fields as
ones of differential forms.

In electrodynamics, differential forms are already a quite popular
way to reformulate maxwell\textquoteright{}s equations etc. \cite{Selfridge}
In continuum mechanics, however, there are only few steps towards
using differential forms, most notably from J. E. Marsden \cite{Marsden}
and V.I. Arnold \cite{Arnold}. In \cite{Kanso}, the stress and strain
tensors are introduced as vector or covector valued differential forms
of various degree. In that approach, the physical meaning of those
tensors are clearer than in the traditional way of expressing them
interchangeably as contravariant, covariant or mixed tensors. However,
vector or covector valued differential forms are a nonstandard subject
in differential geometry and usually do not simplify calculations. 

For that reason, in this paper a path between traditional tensor analysis
and the very general and mathematically rigorous approach of \cite{Marsden}
is taken, thereby restricting to linear elasticity. Due to their relevance,
only static problems without body forces are considered. 

Absolutely necessary to determine deformations inside a body is a
suitable system of differential equations for displacements, stresses
or strains, respectively. Since displacement fields can be considered
as fundamental in the sense that stress or strain fields can be determined
from them, the cauchy-navier displacement equations shall be reformulated
in chapter\eqref{sec:Displacements}. The resulting set of equations
have only the displacement variables as unknowns, so displacement
boundary conditions are easy to handle in this system of equations.
In order to allow also traction boundary conditions and to relate
stresses, strains and displacements, it is necessary to express the
well-known stress and strain tensor of linear elasticity in the same
modern differential geometric formalism . This will be done in chapter
\eqref{sec:Traction}. 

It shall be noted that throughout the paper it was assumed that manifolds
have sufficiently smooth boundaries and boundary values are also sufficiently
smooth.

\section{\label{sec:Displacements}Cauchy Navier\textquoteright{}s Displacement
Equations of Static Linear Elasticity}

\subsection{Basic Definitions}

Problems in static linear elasticity consist of finding either stress
fields that satisfy the equilibrium equation $\nabla\cdot\overleftrightarrow{\sigma}=0$
or, equivalently, displacements that fulfil the Cauchy Navier equation\cite{Timoshenko}
inside a suitable submanifold $B$ of $E^{3}$: \begin{equation}
\mu\triangle\vec{u}+(\lambda+\mu)\nabla(\nabla\cdot\vec{u})=0\end{equation}

In this article, we will regard the latter equation as fundamental
because stresses and strains can easily be calculated from displacements.
Also, there is no need for additional compatibility equations to be
satisfied as is the case for the equilibrium equation for stresses
or strains.

Displacements in elasticity theory are naturally tangent vector fields,
e.g. $\vec{u}:E^{3}\supseteq B\rightarrow TB$. In order to make use
of exterior calculus, a corresponding displacement covector field
$u:E^{3}\rightarrow T^{*}E^{3}$ is introduced, which is derived trivially
from the displacement vector field by index lowering: 

\begin{equation}
u:=\vec{u}^{b}=g(\vec{u})\label{eq:0}\end{equation}

Here, $g(\vec{u})$ is the action of the covariant metric tensor on
the vector $\vec{u}$. The superscript b indicates the index lowering
process or association of a covector field to a vector field, respectively.
The inverse action is the index raising operation $\vec{u}=u^{\#}=g^{-1}(u)$.

\subsection{Vector Differential Operators in Exterior Calculus }
\begin{lem*}
For the differential operators grad, div and rot, applied on vector
or scalar fields, respectively, the following relationships are valid:
\end{lem*}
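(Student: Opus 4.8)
The plan is to verify the asserted identities one at a time, reducing each to a short computation in an arbitrary oriented chart and using nothing beyond the definitions of the exterior derivative $d$, the Hodge star $\star$ associated with $g$, and the musical isomorphisms $^{b}$, $^{\#}$ from \eqref{eq:0}. First I would fix coordinates $x^{1},x^{2},x^{3}$, write $g=g_{ij}\,dx^{i}\otimes dx^{j}$, set $\sqrt{g}:=\sqrt{\det(g_{ij})}$, and record the coordinate expressions for $\star$, namely $\star 1=\sqrt{g}\,dx^{1}\wedge dx^{2}\wedge dx^{3}$, $\star\,dx^{i}=\tfrac{1}{2}\sqrt{g}\,g^{ij}\varepsilon_{jkl}\,dx^{k}\wedge dx^{l}$, and $\star(dx^{j}\wedge dx^{k})=\sqrt{g}\,g^{jp}g^{kq}\varepsilon_{pqr}\,dx^{r}$, with $\varepsilon$ the Levi-Civita symbol. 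The crucial point is that no Christoffel symbols ever enter, which is the whole motivation for the reformulation.

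For $\operatorname{grad}$, the identity $\operatorname{grad}f=(df)^{\#}$ is essentially immediate: $df=\partial_{i}f\,dx^{i}$, and raising the index gives $g^{ij}\partial_{i}f\,\partial_{j}$, the classical curvilinear gradient. For $\operatorname{div}$, I would take $\vec v=v^{i}\partial_{i}$, lower to $v^{b}$, compute $\star\,v^{b}$, apply $d$, and apply $\star$ once more; the two $\sqrt{g}$ factors combine so that $\star d\star v^{b}=\tfrac{1}{\sqrt{g}}\partial_{i}(\sqrt{g}\,v^{i})$, which is exactly $\nabla\cdot\vec v$ in curvilinear coordinates. For $\operatorname{rot}$ in three dimensions, I would compute $d\,v^{b}$ (a $2$-form), apply $\star$ to return to a $1$-form, and raise the index; collecting the $\varepsilon$-contractions reproduces the usual components of $\operatorname{rot}\vec v$. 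Each step is a finite, if slightly fiddly, index manipulation, and identities like $\operatorname{rot}\operatorname{grad}=0$ and $\operatorname{div}\operatorname{rot}=0$ then drop out of $d^{2}=0$.

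The only real subtlety — the main obstacle, such as it is — is bookkeeping the metric and orientation factors so that the formulas reproduce the Gibbsian operators in \emph{every} coordinate system rather than only in orthonormal ones: one must check that the $\sqrt{g}$'s produced by the two successive Hodge stars in the divergence identity recombine correctly, and that the sign convention chosen for $\star$ on $1$- and $2$-forms in $E^{3}$ (equivalently, the statement $\star\star=\mathrm{id}$ in the relevant degrees) makes $\operatorname{rot}$ come out with the standard orientation. As a cross-check, and as a preview of the Lie-derivative techniques used later in the paper, the divergence identity can alternatively be obtained coordinate-freely from $\mathcal{L}_{\vec v}(\star 1)=(\nabla\cdot\vec v)\,\star 1$ together with Cartan's formula $\mathcal{L}_{\vec v}=d\iota_{\vec v}+\iota_{\vec v}d$. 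Once the coordinate formulas for $\star$ above are fixed, the rest is routine verification.
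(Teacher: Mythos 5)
Your outline is correct, and in fact the paper itself offers no proof of this lemma at all: it simply defers to Flanders \cite{Flanders}, where essentially the coordinate verification you describe is carried out. Your computation is the standard one and the details check out: $\star v^{b}=\tfrac{1}{2}\sqrt{g}\,v^{m}\varepsilon_{mkl}\,dx^{k}\wedge dx^{l}$, so $\star d\star v^{b}=\tfrac{1}{\sqrt{g}}\partial_{i}(\sqrt{g}\,v^{i})$, which is the curvilinear divergence, and in three Riemannian dimensions $\star\star=\mathrm{id}$ in every degree, so the sign and orientation worries you flag are genuine but resolve favorably (and consistently with the paper's later sign convention $e=\nabla\cdot\vec{u}=\star d\star u=-\delta u$). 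The only caveat is that what you have written is a plan rather than a finished argument --- the $\varepsilon$-contractions for the curl identity are asserted rather than displayed --- but every step is routine and none would fail; your closing observation that the divergence identity also follows coordinate-freely from $\mathcal{L}_{\vec v}(\star 1)=(\nabla\cdot\vec v)\star 1$ and Cartan's formula is a nice bridge to the Lie-derivative machinery the paper uses in Section \ref{sec:Traction}.
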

\begin{equation}
\nabla f=df^{\#}\label{eq:1}\end{equation}

\begin{equation}
\nabla\cdot\overrightarrow{u}=*d*u\label{eq:2}\end{equation}

\begin{equation}
\nabla\times\overrightarrow{u}=(*du)^{\#}\label{eq:3}\end{equation}

Proofs for these relations can be found for example in \cite{Flanders}.

\subsection{Cauchy Navier Equation}
\begin{thm}
Given a displacement vector field $\vec{u}$ that satisfies the Cauchy
Navier equations, then the corresponding covector field $u$ satisfies 
\end{thm}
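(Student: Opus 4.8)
The plan is to apply the index-lowering operation $(\cdot)^{b}$ to the whole Cauchy Navier equation and to rewrite the two resulting summands with the dictionary \eqref{eq:1}--\eqref{eq:3}. Because $B$ sits inside $E^{3}$, the metric is flat and, in Cartesian coordinates, has constant components, so index lowering commutes with $\nabla$, $\nabla\cdot$ and $\nabla\times$; it therefore suffices to compute $(\triangle\vec{u})^{b}$ and $(\nabla(\nabla\cdot\vec{u}))^{b}$ separately and to add them. The second of these is the easier one: $\nabla\cdot\vec{u}$ is a scalar field, equal to $*d*u$ by \eqref{eq:2}, and applying \eqref{eq:1} to the gradient of that scalar gives $(\nabla(\nabla\cdot\vec{u}))^{b}=d(*d*u)=d*d*u$.

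For the Laplacian I would avoid differentiating components and instead use the classical vector identity $\triangle\vec{u}=\nabla(\nabla\cdot\vec{u})-\nabla\times(\nabla\times\vec{u})$, which holds on flat space. The first summand has just been computed. For the second, \eqref{eq:3} gives $\nabla\times\vec{u}=(*du)^{\#}$; writing $\vec{v}:=(*du)^{\#}$, so that $\vec{v}^{b}=*du$, a second application of \eqref{eq:3} yields $\nabla\times\vec{v}=\bigl(*d(\vec{v}^{b})\bigr)^{\#}=(*d*du)^{\#}$, hence $(\nabla\times(\nabla\times\vec{u}))^{b}=*d*du$. Therefore $(\triangle\vec{u})^{b}=d*d*u-*d*du$. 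Substituting both expressions into the lowered equation, $\mu\bigl(d*d*u-*d*du\bigr)+(\lambda+\mu)\,d*d*u=0$, and collecting the $d*d*u$ terms gives the asserted identity $(\lambda+2\mu)\,d*d*u-\mu\,*d*du=0$, equivalently $(\lambda+2\mu)\,d*d*u=\mu\,*d*du$.

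Most of this is bookkeeping; the step that needs the most care is the curl of a curl, where the alternation of $\#$ and $b$ has to be tracked so that the identity $(\vec{w}^{\#})^{b}=\vec{w}$ is invoked at exactly the right place, and where the orientation and Hodge star conventions must be kept consistent across the two uses of \eqref{eq:3}. A secondary point worth a sentence is the justification that $(\cdot)^{b}$ passes through the three differential operators, which rests solely on the constancy of the Euclidean metric in Cartesian coordinates; if one prefers, the whole theorem can instead be verified by a direct componentwise computation in such coordinates, at the cost of losing the coordinate-free flavour.
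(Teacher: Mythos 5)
Your proposal is correct and follows essentially the same route as the paper: rewrite the Laplacian via $\triangle\vec{u}=\nabla(\nabla\cdot\vec{u})-\nabla\times(\nabla\times\vec{u})$ and translate term by term with \eqref{eq:1}--\eqref{eq:3} to reach $(\lambda+2\mu)d*d*u-\mu*d*du=0$, which is the paper's intermediate equation \eqref{eq:6}; you are in fact more explicit than the paper about why $(\cdot)^{b}$ commutes with the operators and about the double use of \eqref{eq:3}. The only step you omit is the final, purely notational substitution of the codifferential ($\delta u=-*d*u$ on one-forms, $\delta(du)=*d*du$ on two-forms) needed to land exactly on the stated form \eqref{eq:4}.
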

\begin{equation}
(\lambda+2\mu)d\delta u+\mu\delta du=0\label{eq:4}\end{equation}

\begin{proof}
The correspondence between the divergence or rotation of vector fields
and exterior derivatives of differential forms is well known. Since,
however, there is no equivalent of the gradient of a vector field
in exterior calculus, the Laplace operator has to be restated using
the well known formula

\begin{equation}
\triangle\vec{u}=\nabla\cdot\nabla\vec{u}=\nabla(\nabla\cdot\vec{u})-\nabla\times\nabla\times\vec{u}\end{equation}

With this, the Cauchy Navier equation appears in the equivalent form 

\begin{equation}
(\lambda+2\mu)\nabla(\nabla\cdot\vec{u})-\mu\nabla\times\nabla\times\vec{u}=0\label{eq:5}\end{equation}

Now, using \ref{eq:1} to \ref{eq:3}, we can restate \ref{eq:5}
in terms of the corresponding displacement one-form $u=\vec{u}^{b}$:

\begin{equation}
(\lambda+2\mu)d*d*u-\mu*d*du=0\label{eq:6}\end{equation}

A somewhat more concise notation of \ref{eq:6} uses the codifferential
operator, defined in $E^{3}$ as

\begin{eqnarray}
\delta:\Omega^{n}(E^{3}) & \rightarrow & \Omega^{n-1}(E^{3})\\
\omega & \mapsto & \begin{array}{cl}
-*d*\omega & \; if\,\omega\in\Omega^{1}(E^{3})\: or\:\omega\in\Omega^{3}(E^{3})\\
*d*\omega & \; if\,\omega\in\Omega^{2}(E^{3})\end{array}\nonumber \end{eqnarray}

With this we arrive at the proposed formula

\begin{equation}
(\lambda+2\mu)d\delta u+\mu\delta du=0\end{equation}

\end{proof}

\subsection{Boundary Conditions }

The treatment of displacement boundary conditions are trivial: the
displacement vector field on the boundary just has to be mapped to
its corresponding covector field. 
\begin{cor*}
Given a displacement vector field $\vec{u}$ that satisfies the Cauchy
Navier equations and assumes given values $\vec{u}|\partial B$ on
the boundary $\partial B$, then the corresponding covector field
$u$ satisfying \ref{eq:4} assumes the values $g(\vec{u})$ on the
same boundary.
\end{cor*}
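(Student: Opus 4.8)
The plan is to reduce the statement to the definition \eqref{eq:0} together with the Theorem just proved; no genuine computation is involved. First I would invoke the Theorem: since $\vec{u}$ satisfies the Cauchy Navier equations on $B$, its associated one-form $u:=\vec{u}^{b}=g(\vec{u})$ satisfies \eqref{eq:4} throughout $B$. That part is already done, so nothing new is needed there.

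Second, I would observe that index lowering by $g$ is a fibrewise (pointwise) algebraic operation: at each point $p\in B$ the covector $u_p$ is the image of $\vec{u}_p$ under the linear map $g_p:T_pE^{3}\to T^{*}_pE^{3}$. Consequently the value of $g(\vec{u})$ at a boundary point depends only on the value of $\vec{u}$ at that same point, so evaluating the identity $u=g(\vec{u})$ along $\partial B$ gives $u|_{\partial B}=g(\vec{u})|_{\partial B}=g\bigl(\vec{u}|_{\partial B}\bigr)$. Hence the one-form $u$ solving \eqref{eq:4} automatically assumes the prescribed boundary values, and the displacement boundary condition for $\vec{u}$ translates verbatim into a Dirichlet-type condition for $u$.

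The only point deserving a word of care — the nearest thing to an obstacle — is to fix precisely what $u|_{\partial B}$ denotes: it is the restriction of the covector field $u$ to the points of $\partial B$, i.e. a section of $T^{*}E^{3}$ along $\partial B$, and \emph{not} the pullback of $u$ to the submanifold $\partial B$, which would discard the component conormal to $\partial B$. With that reading, which is the one consistent with interpreting $\vec{u}|_{\partial B}$ as a $TE^{3}$-valued field along the boundary, the correspondence loses no information, and the boundary value problem for \eqref{eq:4} with data $g(\vec{u}|_{\partial B})$ is equivalent to the classical displacement problem for $\vec{u}$.
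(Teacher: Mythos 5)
Your proposal is correct and matches the paper's (very brief) treatment: the paper simply notes that the boundary displacement field ``just has to be mapped to its corresponding covector field,'' which is exactly your observation that index lowering is a pointwise operation and therefore commutes with restriction to $\partial B$. Your added caveat distinguishing the restriction of $u$ to points of $\partial B$ from the pullback of $u$ to the submanifold $\partial B$ is a sensible clarification the paper omits, but it does not change the argument.
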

Traction boundary conditions, however, are much harder to handle.
It would be desirable to find a way to express traction forces or
the elastic stress tensor in terms of the displacement covector field,
resulting in boundary conditions that can be stated as a set of differential
equations in the discplacement covector field. This will be the content
of the next and largest part of this paper.

\section{\label{sec:Traction}Traction Boundary Conditions }

\subsection{The Linear Strain Tensor }

According to \cite{Marsden}, the strain tensor $\overleftrightarrow{\epsilon}\in T_{2}^{0}(E^{3})$
of linear elasticity $\overleftrightarrow{\epsilon}=\epsilon_{ij}e^{i}\otimes e^{j}=\frac{1}{2}(\nabla\vec{u}+\nabla\vec{u}^{T})$
can also be written as the Lie derivative of the metric tensor with
respect to the displacement vector field $\vec{u}$:

\begin{equation}
\overleftrightarrow{\epsilon}=\frac{1}{2}L_{\vec{u}}\overleftrightarrow{g}\label{eq:7}\end{equation}

with the metric tensor $\overleftrightarrow{g}=g_{ij}dx^{i}\otimes dx^{j}$.
Beware that we need the discplacement vector field here, not the corresponding
one-form. This is unfortunate, because we now need to map one-form
into vector field components and vice versa. However, in the case
of orthonormal bases and physical vector components this is simple:
the components are unchanged. Otherwise, index lowering or raising
has to be used. 

As an example, using a general canonical vector base, it shall be
shown, that the formula above is equivalent to results of classical
tensor analysis. 
\begin{cor*}
The components of \ref{eq:7} satisfy 
\end{cor*}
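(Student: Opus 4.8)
The plan is to do the verification in coordinates, exactly as the phrase ``using a general canonical vector base'' invites. First I would recall the coordinate expression for the Lie derivative of an arbitrary covariant $2$-tensor $T=T_{ij}\,dx^i\otimes dx^j$ along a vector field $\vec X=X^k\partial_k$,
\[
(L_{\vec X}T)_{ij}=X^k\partial_k T_{ij}+T_{kj}\partial_i X^k+T_{ik}\partial_j X^k ,
\]
and specialize it to $T=\overleftrightarrow{g}$ and $\vec X=\vec u$, obtaining
\[
(L_{\vec u}\overleftrightarrow{g})_{ij}=u^k\partial_k g_{ij}+g_{kj}\partial_i u^k+g_{ik}\partial_j u^k .
\]

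The second step is to convert the bare partial derivatives into covariant ones. Writing $u_j=g_{jk}u^k$ gives $g_{kj}\partial_i u^k=\partial_i u_j-(\partial_i g_{jk})u^k$ and likewise for the symmetric term; then I would invoke the standard identity $\partial_k g_{ij}-\partial_i g_{jk}-\partial_j g_{ik}=-2\Gamma_{ij,k}=-2g_{kl}\Gamma^l_{ij}$ for the Levi-Civita Christoffel symbols. Collecting terms yields
\[
(L_{\vec u}\overleftrightarrow{g})_{ij}=\partial_i u_j+\partial_j u_i-2\Gamma^l_{ij}u_l=\nabla_i u_j+\nabla_j u_i ,
\]
so that $\epsilon_{ij}=\tfrac12(\nabla_i u_j+\nabla_j u_i)$, which is precisely the classical formula for the linearized strain in curvilinear coordinates. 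A slicker coordinate-free variant runs the same argument through $L_{\vec u}g(Y,Z)=g(\nabla_Y\vec u,Z)+g(Y,\nabla_Z\vec u)$, derived by feeding metric compatibility and vanishing torsion into the expansion of the Lie derivative; I would probably leave that as a remark.

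The only genuine obstacle here is bookkeeping, not mathematics. The Lie-derivative formula naturally produces the \emph{contravariant} components $u^k$ of the displacement, whereas the classical strain tensor is usually written with the \emph{covariant} components $u_i$, so one must be careful about where index lowering is applied and must use $\nabla\overleftrightarrow{g}=0$ as the precise reason the metric may be moved inside the covariant derivative. Once the conventions are fixed and the Christoffel identity is in hand, the verification reduces to a short index manipulation with nothing hidden in it.
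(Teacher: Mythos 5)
Your argument is correct, and for the statement actually at issue it reaches the same identity as the paper, but by a different route. You quote the standard coordinate formula $(L_{\vec X}T)_{ij}=X^k\partial_k T_{ij}+T_{kj}\partial_i X^k+T_{ik}\partial_j X^k$ for a covariant $2$-tensor and specialize it to $T=\overleftrightarrow{g}$, $\vec X=\vec u$, which is immediately the claimed component expression. The paper instead \emph{derives} that formula from scratch: it applies the Leibniz rule for the Lie derivative over tensor products to $g_{ij}\,dx^{i}\otimes dx^{j}$, evaluates $L_{\vec u}dx^{i}=d(\vec u\cdot dx^{i})=du^{i}=u_{,k}^{i}dx^{k}$ via Cartan's magic formula, and then relabels dummy indices to collect everything on $dx^{i}\otimes dx^{j}$. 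If the general coordinate formula for $L_{\vec X}T$ is taken as known (it is in any standard reference, e.g. Bishop--Goldberg), your proof is shorter; the paper's version is self-contained and deliberately showcases the exterior-calculus machinery it is promoting. Note also that the second half of your proposal --- trading $\partial$ for $\nabla$ via the Christoffel identity to land on $\epsilon_{ij}=\tfrac12(\nabla_i u_j+\nabla_j u_i)$ --- is not part of this corollary at all, but is precisely the content of the \emph{next} corollary in the paper, which establishes the same equivalence in the opposite direction, starting from $\tilde\epsilon_{ij}=\tfrac12(u_{i|j}+u_{j|i})$ and expanding the Christoffel symbols in terms of metric derivatives. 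Either direction of that bookkeeping is valid; your caution about where index lowering happens and about using $\nabla\overleftrightarrow{g}=0$ is exactly the right thing to watch.
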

\begin{equation}
\overleftrightarrow{\epsilon}=\epsilon_{ij}dx^{i}\otimes dx^{j}=\frac{1}{2}(u^{k}g_{ij,k}+g_{kj}u_{,i}^{k}+g_{ik}u_{,j}^{k})dx^{i}\otimes dx^{j}\label{eq:strain}\end{equation}

\begin{proof}
This can be proven by direct calculation. 

The Lie derivative adheres to the product rule for tensor products:\begin{equation}
L_{\vec{x}}(T\otimes S)=L_{\vec{x}}T\otimes S+T\otimes L_{\vec{x}}S\end{equation}

This holds for tensor fields of any rank, and therefore especially
for scalar functions and one-forms. Thus we have 

\begin{equation}
L_{\vec{u}}(g_{ij}dx^{i}\otimes dx^{j})=(L_{\vec{u}}g_{ij})dx^{i}\otimes dx^{j}+g_{ij}L_{\vec{u}}dx^{i}\otimes dx^{j}+g_{ij}dx^{i}\otimes L_{\vec{u}}dx^{j}\label{eq:12}\end{equation}

Cartan\textquoteright{}s \quotedblbase{}magic formula\textquotedblleft{}
can be applied to calculate the lie derivatives of one-forms:

\begin{equation}
L_{\vec{x}}\alpha=\vec{x}\cdot d\alpha+d(\vec{x}\cdot\alpha)\end{equation}

Applied to the canonical basis covectors $dx^{i}$and afterwards to
\ref{eq:12}:

\begin{equation}
L_{\vec{u}}dx^{i}=du^{i}=u_{,k}^{i}dx^{k}\end{equation}

\begin{eqnarray}
L_{\vec{u}}(g_{ij}dx^{i}\otimes dx^{j}) & = & u^{k}g_{ij,k}dx^{i}\otimes dx^{j}+g_{ij}u_{,k}^{i}dx^{k}\otimes dx^{j}+g_{ij}u_{,k}^{j}dx^{i}\otimes dx^{k}\nonumber \\
 & = & (u^{k}g_{ij,k}+g_{kj}u_{,i}^{k}+g_{ik}u_{,j}^{k})dx^{i}\otimes dx^{j}\end{eqnarray}

Thus we have \[
\overleftrightarrow{\epsilon}=\epsilon_{ij}dx^{i}\otimes dx^{j}=\frac{1}{2}(u^{k}g_{ij,k}+g_{kj}u_{,i}^{k}+g_{ik}u_{,j}^{k})dx^{i}\otimes dx^{j}\]
\end{proof}
\begin{cor*}
The components of the linear elastic strain tensor defined as $\tilde{\epsilon}_{ij}=\frac{1}{2}(u_{i|j}+u_{j|i})$
satisfy \ref{eq:strain}.\end{cor*}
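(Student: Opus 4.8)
The plan is to expand the covariant-derivative definition of $\tilde\epsilon_{ij}$ in coordinates and to reduce it, with the help of the Levi-Civita connection, to the right-hand side of \ref{eq:strain}. First I would write the covariant derivative of the displacement one-form as $u_{i|j} = u_{i,j} - \Gamma^{k}_{ij}u_{k}$ and insert the index-lowering relation $u_{i} = g_{il}u^{l}$ from \ref{eq:0}. The product rule gives $u_{i,j} = g_{il,j}u^{l} + g_{il}u^{l}_{,j}$, hence $u_{i|j} = g_{il,j}u^{l} + g_{il}u^{l}_{,j} - \Gamma^{k}_{ij}g_{kl}u^{l}$. Symmetrizing in $i$ and $j$, and using that the Levi-Civita connection is symmetric, $\Gamma^{k}_{ij}=\Gamma^{k}_{ji}$, yields
\[
\tilde\epsilon_{ij} = \tfrac{1}{2}\big((g_{il,j}+g_{jl,i})u^{l} + g_{il}u^{l}_{,j} + g_{jl}u^{l}_{,i} - 2\Gamma^{k}_{ij}g_{kl}u^{l}\big).
\]

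Comparing with \ref{eq:strain}, the two middle terms already agree with $g_{kj}u^{k}_{,i}$ and $g_{ik}u^{k}_{,j}$ after renaming the summation index and using $g_{il}=g_{li}$, so the claim reduces to the purely metric identity $(g_{il,j}+g_{jl,i}) - 2\Gamma^{k}_{ij}g_{kl} = g_{ij,l}$. This is the one step that actually uses the connection: substituting the coordinate formula $\Gamma^{k}_{ij} = \tfrac{1}{2}g^{km}(g_{mi,j}+g_{mj,i}-g_{ij,m})$ and contracting the inverse metric against $g_{kl}$ (so that $g^{km}g_{kl}=\delta^{m}_{l}$) gives $2\Gamma^{k}_{ij}g_{kl} = g_{li,j}+g_{lj,i}-g_{ij,l}$, and the desired identity follows at once from the symmetry of $g$. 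Equivalently one may start from metric compatibility $g_{ij|k}=0$, i.e. $g_{ij,k}=\Gamma^{l}_{ki}g_{lj}+\Gamma^{l}_{kj}g_{li}$, and take a suitable combination of its three cyclic permutations; it is the same computation.

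Substituting this back collapses the terms $g_{il,j}u^{l}$, $g_{jl,i}u^{l}$ and $-2\Gamma^{k}_{ij}g_{kl}u^{l}$ into the single term $u^{k}g_{ij,k}$, which is exactly the missing summand of \ref{eq:strain}; hence $\tilde\epsilon_{ij}=\epsilon_{ij}$. I do not expect a genuine obstacle here: once the Levi-Civita connection is written out, everything is an algebraic identity in the first partial derivatives of the metric, and the only point needing care is consistent relabelling of dummy indices together with repeated use of $g_{ij}=g_{ji}$. Conceptually, the computation is just the standard fact that for the Levi-Civita connection the symmetrized covariant derivative of the one-form $g(\vec u)$ equals $\tfrac{1}{2}L_{\vec u}\overleftrightarrow{g}$, which by the previous corollary is $\overleftrightarrow{\epsilon}$.
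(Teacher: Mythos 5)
Your proof is correct and follows essentially the same route as the paper: expand $u_{i|j}$ with Christoffel symbols, use $\Gamma^k_{ij}=\Gamma^k_{ji}$, lower the index via $u_i=g_{ik}u^k$ with the product rule, substitute the coordinate formula for the Levi-Civita connection, and cancel the metric-derivative terms. Isolating the step as the metric identity $(g_{il,j}+g_{jl,i})-2\Gamma^k_{ij}g_{kl}=g_{ij,l}$ is only a cosmetic repackaging of the paper's term-by-term cancellation.
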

\begin{proof}
Let's first rewrite $\tilde{\epsilon}_{ij}=\frac{1}{2}(u_{i|j}+u_{j|i})$
involving the Christoffel symbols of the second kind:

\begin{eqnarray}
\tilde{\epsilon}_{ij} & = & \frac{1}{2}(u_{i,j}-\Gamma_{ij}^{k}u_{k}+u_{j,i}-\Gamma_{ji}^{k}u_{k})\nonumber \\
 & = & \frac{1}{2}(u_{i,j}+u_{j,i})-\Gamma_{ij}^{k}u_{k}\end{eqnarray}

In the last step we used the symmetry property of the Christoffel
symbol $\Gamma_{ij}^{k}=\Gamma_{ji}^{k}$.

Now we raise the index of the $u_{i}$components and substitute $\Gamma_{ij}^{k}=\frac{1}{2}g^{km}(g_{im,j}+g_{jm,i}-g_{ij,m})$:

\begin{eqnarray}
\tilde{\epsilon}_{ij} & = & \frac{1}{2}((g_{ik}u^{k})_{,j}+(g_{jk}u^{k})_{,i})-\frac{1}{2}g^{km}(g_{im,j}+g_{jm,i}-g_{ij,m})u_{k}\nonumber \\
 & = & \frac{1}{2}(g_{ik,j}u^{k}+g_{ik}u_{,j}^{k}+g_{jk,i}u^{k}+g_{jk}u_{,i}^{k}-g_{im,j}u^{m}-g_{jm,i}u^{m}+g_{ij,m}u^{m})\nonumber \\
 & = & \frac{1}{2}(g_{ik,j}u^{k}+g_{ik}u_{,j}^{k}+g_{jk,i}u^{k}+g_{jk}u_{,i}^{k}-g_{ik,j}u^{k}-g_{jk,i}u^{k}+g_{ij,k}u^{k})\nonumber \\
 & = & \frac{1}{2}(g_{ik}u_{,j}^{k}+g_{jk}u_{,i}^{k}+g_{ij,k}u^{k})\end{eqnarray}

A comparison shows that $\tilde{\epsilon}_{ij}=\epsilon_{ij}$.
\end{proof}

\subsection{The Stress Tensor}

For isotropic homogeneous bodies, the stress and strain tensor of
linear elasticity are related as follows \cite{Marsden}: 

\begin{equation}
\overleftrightarrow{\sigma}=\lambda e\overleftrightarrow{g}+2\mu\overleftrightarrow{\epsilon}\end{equation}

Here, $e=\nabla\cdot\vec{u}$ is the volume expansion under deformation.
For a displacement one-form $u$ we have $e=-\delta u=*d*u$. Since
$\overleftrightarrow{\epsilon}=\frac{1}{2}L_{\vec{u}}\overleftrightarrow{g}$
we get: 

\begin{equation}
\overleftrightarrow{\sigma}=-\lambda\delta u\overleftrightarrow{g}+\mu L_{\vec{u}}\overleftrightarrow{g}\end{equation}

Beware that as before, unfortunately, we have to deal with the displacement
one-form and vector field at the same time.

\subsection{Traction Boundary Conditions }
\begin{lem*}
Given a traction vector field $\vec{t}:\partial B\supset\partial S\rightarrow TB$
on a part of $B$s boundary. The corresponding traction covector field
$t:=\vec{t}^{b}$ then satisfies
\end{lem*}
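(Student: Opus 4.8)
The plan is to start from Cauchy's stress theorem, which in coordinate-free form reads $\vec{t}=\overleftrightarrow{\sigma}\cdot\vec{n}$, with $\vec{n}$ the outward unit normal of $\partial S$ and $\overleftrightarrow{\sigma}$ regarded as an endomorphism of the tangent space. First I would pass to covectors: applying the metric and using that the stress tensor is symmetric, $t=\vec{t}^{b}=g(\overleftrightarrow{\sigma}\cdot\vec{n})=\overleftrightarrow{\sigma}(\,\cdot\,,\vec{n})$, i.e. the traction covector is the interior product $\vec{n}\cdot\overleftrightarrow{\sigma}$ of the stress $(0,2)$-tensor with the normal vector.

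Next I would insert the constitutive relation obtained in the previous subsection, $\overleftrightarrow{\sigma}=-\lambda\,\delta u\,\overleftrightarrow{g}+\mu\,L_{\vec{u}}\overleftrightarrow{g}$, and distribute the contraction with $\vec{n}$. The $\lambda$-term is immediate, $-\lambda\,\delta u\,(\vec{n}\cdot\overleftrightarrow{g})=-\lambda\,(\delta u)\,n$ with $n=\vec{n}^{b}$ the normal covector. The $\mu$-term requires evaluating $\vec{n}\cdot(L_{\vec{u}}\overleftrightarrow{g})$; here I would repeat the device used in the strain-tensor corollary, expanding $L_{\vec{u}}\overleftrightarrow{g}$ via the product rule for tensor products and Cartan's magic formula until it is written through $d$, interior products and the one-form $u$ alone, and only then contract with $\vec{n}$. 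Assembling the two pieces yields the asserted formula for $t$. A convenient identity to keep ready is $(L_{\vec{u}}\overleftrightarrow{g})(X,\vec{n})=(\nabla_{X}u)(\vec{n})+(\nabla_{\vec{n}}u)(X)$, valid because the Levi-Civita connection is metric and torsion-free; it exhibits the $\mu$-term as manifestly tensorial in $\vec{n}$, hence pointwise well defined on $\partial S$, where only $\vec{n}$ itself and not its ambient derivatives is at hand.

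The step I expect to be the real work is exactly this last one. In the strain corollary one merely reads off all components of a tensor field defined on the whole of $B$; here one contracts against a vector field $\vec{n}$ that lives only on the boundary, so one must be careful that no uncontrolled normal derivatives of $\vec{n}$ creep in and that $L_{\vec{u}}\overleftrightarrow{g}$ does not get entangled with an $L_{\vec{u}}\vec{n}$ generated along the way. Rewriting the $\mu$-term through the Levi-Civita connection as above is the cleanest way around this. After that I would only need to watch the sign conventions — the minus sign in $\delta$ on one-forms and the relation $e=\nabla\cdot\vec{u}=-\delta u$ — so that the result reduces, upon raising indices, to the classical Cauchy relation $\vec{t}=\lambda(\nabla\cdot\vec{u})\,\vec{n}+2\mu\,\overleftrightarrow{\epsilon}\cdot\vec{n}$; the remaining algebra is routine term matching.
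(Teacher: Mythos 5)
Your opening paragraph is exactly the paper's argument: the lemma is nothing more than Cauchy's stress theorem $\vec{t}=\overleftrightarrow{\sigma}\cdot\vec{n}$ combined with index lowering and the constitutive law $\overleftrightarrow{\sigma}=\lambda e\overleftrightarrow{g}+2\mu\overleftrightarrow{\epsilon}$, and the paper accordingly dismisses it in one line. So for the statement actually being proved, you are correct and on the same route.

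Be aware, though, that most of your proposal is aimed at a different target. Expanding the $\mu$-term $\vec{n}\cdot(L_{\vec{u}}\overleftrightarrow{g})$ via Cartan's formula or the Levi--Civita connection is the content of the \emph{theorem that follows} this lemma (the derivation of \eqref{eq:traction}), not of the lemma itself; the lemma stops at $\lambda en+2\mu\overleftrightarrow{\epsilon}\cdot\vec{n}$ and never touches the Lie derivative. That said, your connection-based identity $(L_{\vec{u}}\overleftrightarrow{g})(X,\vec{n})=(\nabla_{X}u)(\vec{n})+(\nabla_{\vec{n}}u)(X)$ is correct, and your observation that $\vec{n}$ lives only on $\partial S$ — so that an expression involving $[\vec{n},\vec{u}]$ or $dn$ tacitly requires an extension of $\vec{n}$ off the boundary, whereas the covariant form is manifestly tensorial in $\vec{n}$ — is a genuine subtlety that the paper's later proof glosses over. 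Keep that point for the theorem; for the lemma, your first paragraph suffices.
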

\begin{equation}
t:=\vec{t}^{b}=\overleftrightarrow{\sigma}\cdot\vec{n}=\lambda en+2\mu\overleftrightarrow{\epsilon}\cdot\vec{n}\label{eq:23}\end{equation}

\begin{proof}
This is a trivial implication of cauchy's stress theorem\end{proof}
\begin{thm*}
Given a traction covector field $t:\partial B\supset\partial S\rightarrow T^{*}B$
and a displacement covector $u$ field satisfying \ref{eq:4}. The
thus given traction boundary conditions are satisfied, if the displacement
covector field and its corresponding vector field $\vec{u}$ adheres
also to the following equation on \textup{$\partial S$:}\\
\begin{equation}
t=-\lambda\delta un+\mu(d(\vec{u}\cdot n)+\vec{u}\cdot dn+[\vec{n},\vec{u}]^{b})\label{eq:traction}\end{equation}
\\
where\textup{ $\vec{n}$} is a normal vector at a given point
and $n=\vec{n}^{b}$ its corresponding one-form.\end{thm*}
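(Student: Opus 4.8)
The plan is to start from the traction formula already proved in the Lemma, namely $t=\lambda e\, n+2\mu\,\overleftrightarrow{\epsilon}\cdot\vec{n}$, and rewrite each term on the right in terms of the displacement one-form $u$ and its companion vector field $\vec{u}$. The first term is immediate: $e=-\delta u$, so $\lambda e\,n=-\lambda\delta u\, n$, which already matches the first summand of \ref{eq:traction}. All the work is therefore in showing that
\begin{equation}
2\,\overleftrightarrow{\epsilon}\cdot\vec{n}=d(\vec{u}\cdot n)+\vec{u}\cdot dn+[\vec{n},\vec{u}]^{b}
\end{equation}
as one-forms on $\partial S$.

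For that identity I would use the Lie-derivative representation $\overleftrightarrow{\epsilon}=\tfrac12 L_{\vec{u}}\overleftrightarrow{g}$ from \ref{eq:7}, so that $2\,\overleftrightarrow{\epsilon}\cdot\vec{n}=(L_{\vec{u}}\overleftrightarrow{g})(\vec{n},\cdot)$. The key algebraic step is the standard formula for the Lie derivative of the metric contracted with a vector: for any vector fields $\vec{x},\vec{y}$,
\begin{equation}
(L_{\vec{u}}\overleftrightarrow{g})(\vec{x},\vec{y})=\vec{x}\cdot d\!\left(\overleftrightarrow{g}(\vec{u},\vec{y})\right)+\overleftrightarrow{g}\!\left([\vec{x},\vec{u}],\vec{y}\right)+\overleftrightarrow{g}\!\left(\vec{u},[\vec{x},\vec{y}]\right)-\vec{x}\cdot\overleftrightarrow{g}(\vec{u},L_{\cdot}\vec{y}),
\end{equation}
which one derives by applying the Leibniz rule $L_{\vec{u}}(\overleftrightarrow{g}(\vec{x},\vec{y}))=(L_{\vec{u}}\overleftrightarrow{g})(\vec{x},\vec{y})+\overleftrightarrow{g}(L_{\vec{u}}\vec{x},\vec{y})+\overleftrightarrow{g}(\vec{x},L_{\vec{u}}\vec{y})$ together with $L_{\vec{u}}\vec{x}=[\vec{u},\vec{x}]$ and $L_{\vec{u}}f=\vec{u}\cdot df$. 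Fixing $\vec{x}=\vec{n}$ and letting $\vec{y}$ range over a frame, $\overleftrightarrow{g}(\vec{u},\vec{y})=\vec{u}\cdot\vec{y}=u(\vec{y})$, so the first term assembles into the one-form $d(\vec{u}\cdot n)$ contracted appropriately, the term with $[\vec{n},\vec{u}]$ assembles into $[\vec{n},\vec{u}]^{b}$ by index lowering, and the remaining pieces involving the variation of $\vec{n}$ collect into $\vec{u}\cdot dn$ (recognizing $dn=d(\vec{n}^{b})$ as the exterior derivative of the normal one-form and using that $\vec{n}\cdot d n$ encodes the antisymmetric part while the symmetric derivative of $\vec{n}$ is what appears in the Lie-derivative contraction). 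Substituting back into $t=-\lambda\delta u\,n+2\mu\,\overleftrightarrow{\epsilon}\cdot\vec{n}$ then yields \ref{eq:traction}.

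The main obstacle I anticipate is the bookkeeping in the last paragraph: disentangling which part of $\overleftrightarrow{g}(\vec{u},[\vec{n},\vec{y}])$ and of the $\vec{n}$-derivative terms is genuinely a property of the normal field extension and which is intrinsic, so that the combination $\vec{u}\cdot dn+[\vec{n},\vec{u}]^{b}$ emerges cleanly and independently of how $\vec{n}$ is extended off $\partial S$. A clean way to manage this is to work in an arbitrary local coordinate frame and simply expand $(L_{\vec{u}}\overleftrightarrow{g})(\vec{n},\partial_i)$ in components using the component formula for $\overleftrightarrow{\epsilon}$ already established in \ref{eq:strain}, namely $\epsilon_{ij}=\tfrac12(u^{k}g_{ij,k}+g_{kj}u^{k}_{,i}+g_{ik}u^{k}_{,j})$, contract with $n^{j}$, and then recognize the resulting expression term-by-term: $n^{j}g_{ik}u^{k}_{,j}$ is the covariant expression for $\vec{u}\cdot dn$ after an integration-by-parts-type rearrangement, $n^{j}(g_{kj}u^{k}_{,i}+u^{k}g_{kj,i})=\partial_i(g_{kj}u^{k}n^{j}) - g_{kj}u^{k}n^{j}_{,i}=(d(\vec{u}\cdot n))_i - \ldots$, and the leftover terms reorganize into the lowered commutator $[\vec{n},\vec{u}]^{b}$. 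Verifying that these three coordinate expressions coincide with the three invariant objects in \ref{eq:traction} is routine once the pattern is identified, but it is exactly the step where sign conventions (especially the one in the definition of $\delta$ and in Cartan's formula $L_{\vec{x}}\alpha=\vec{x}\cdot d\alpha+d(\vec{x}\cdot\alpha)$ as used in the previous proofs) must be tracked with care.
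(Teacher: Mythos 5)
Your proposal is correct and takes essentially the same approach as the paper: substitute $e=-\delta u$ into the Lemma and apply the Leibniz rule for $L_{\vec{u}}$ to the contraction of $\overleftrightarrow{g}$ with $\vec{n}$, finishing with Cartan's formula. The paper's version is tidier because it contracts only one slot, so that $\overleftrightarrow{g}\cdot\vec{n}=n$ gives $2\overleftrightarrow{\epsilon}\cdot\vec{n}=L_{\vec{u}}n-[\vec{u},\vec{n}]^{b}=d(\vec{u}\cdot n)+\vec{u}\cdot dn+[\vec{n},\vec{u}]^{b}$ in one line and dissolves the bookkeeping you anticipate; note also that your displayed ``key algebraic step'' is garbled as written (a dangling $L_{\cdot}\vec{y}$, and the derivative should be taken along $\vec{u}$ of $\overleftrightarrow{g}(\vec{x},\vec{y})$ rather than along $\vec{x}$), although the Leibniz rule you cite to derive it is the correct starting point.
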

\begin{proof}
Since the Lie derivative is also a derivative with respect to scalar
products ($L_{\vec{x}}(T\cdot S)=L_{\vec{x}}T\cdot S+T\cdot L_{\vec{x}}S$
for any two tensors $T$ and $S$) we can write:

\begin{eqnarray*}
2\overleftrightarrow{\epsilon}\cdot\vec{n} & = & L_{\vec{u}}\overleftrightarrow{g}\cdot\vec{n}=L_{\vec{u}}(\overleftrightarrow{g}\cdot\vec{n})-\overleftrightarrow{g}\cdot L_{\vec{u}}\vec{n}\\
 & = & L_{\vec{u}}\vec{n}-[\vec{u},\vec{n}]^{b}=d(\vec{u}\cdot n)+\vec{u}\cdot dn+[\vec{n},\vec{u}]^{b}\end{eqnarray*}

Now only the part $\lambda en$ is missing from \ref{eq:23}, which
is equal to $-\lambda\delta un$.
\end{proof}
Now let\textquoteright{}s assume we have a coordinate system on which
one of the canonical base vectors equals the surface normal $\vec{n}$.
In accordance with cylindrical or spherical coordinates, let\textquoteright{}s
call this coordinate function $r$, the canonical base vector along
this coordinate $\partial_{r}$ and the canonical base covector $dr$.
Let\textquoteright{}s also assume that $\vec{n}$ (and thus $n$)
are normalized, as well as $\partial_{r}$ and $dr$. In this case
$\vec{n}=\partial_{r}$ and $n=dr$. Then we can reformulate \ref{eq:traction}
according to the following
\begin{cor*}
Given a coordinate system with a coordinate function $r$ satisfying
the description above, a traction covector field $t:\partial B\supset\partial S\rightarrow T^{*}B$
and a displacement covector $u$ field satisfying \ref{eq:4}, the
latter have to adhere to the following equation
\end{cor*}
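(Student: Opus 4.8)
The plan is to obtain the asserted formula by specializing \eqref{eq:traction} to the adapted frame and evaluating the three $\mu$-terms one at a time. First I would substitute $\vec{n}=\partial_{r}$ and $n=dr$ throughout \eqref{eq:traction}. The middle term $\vec{u}\cdot dn$ then becomes $\vec{u}\cdot d(dr)$, which vanishes identically by $d\circ d=0$; this is precisely the simplification the adapted coordinate is designed to produce, since $dn$ would otherwise encode the curvature of the level surfaces of $r$. Next I would rewrite $\vec{u}\cdot n=\langle dr,\vec{u}\rangle=u^{r}$, so that $d(\vec{u}\cdot n)=du^{r}=u^{r}_{,k}\,dx^{k}$. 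Using the identity $e=\nabla\cdot\vec{u}=-\delta u$ recorded above, the first term of \eqref{eq:traction} is $-\lambda\delta u\,n=\lambda e\,dr$.

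For the bracket term I would use that, in the intended setting, $\partial_{r}$ is a coordinate basis vector field, so $[\partial_{r},\partial_{k}]=0$ and hence $[\vec{n},\vec{u}]=[\partial_{r},\vec{u}]=\mathcal{L}_{\partial_{r}}\vec{u}=(\partial_{r}u^{k})\,\partial_{k}$; lowering the index with $\overleftrightarrow{g}$ gives $[\vec{n},\vec{u}]^{b}=g_{jk}u^{k}_{,r}\,dx^{j}$, equivalently $u_{j,r}\,dx^{j}-g_{jk,r}u^{k}\,dx^{j}$. Collecting the three contributions yields $t=\lambda e\,dr+\mu\bigl(du^{r}+[\vec{n},\vec{u}]^{b}\bigr)$, and writing this out in the chosen coordinates produces the claimed component expression, with the tangential components coming entirely from the $\mu$-terms and the normal component from the combination of $\lambda e\,dr$ with the $dr$-part of the bracket term.

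The main obstacle I anticipate is bookkeeping rather than anything conceptual. One must check that normalizing $\partial_{r}$ is consistent with its remaining a coordinate vector field — which holds for the $r$-lines of cylindrical and spherical coordinates, where $g_{rr}=1$ already and the metric is orthogonal in the $r$-direction, so that $g^{rr}=1$ as well — since otherwise $[\partial_{r},\partial_{k}]$ need not vanish and an extra structure-function term would survive in the bracket. One also has to track the index raising and lowering consistently, because \eqref{eq:traction} genuinely mixes the displacement one-form $u$ (through $\delta u$) and the displacement vector field $\vec{u}$ (through $\vec{u}\cdot n$ and the bracket); translating between $u_{j}$ and $u^{k}$ is where most of the routine algebra lives.
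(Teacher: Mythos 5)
Your proposal is correct and follows essentially the same route as the paper: substitute $\vec{n}=\partial_{r}$, $n=dr$ into \eqref{eq:traction}, note $\vec{u}\cdot dn=0$ since $dn=d(dr)=0$, identify $\vec{u}\cdot n=u^{r}$ so $d(\vec{u}\cdot n)=du^{r}$, and keep the bracket term, giving \eqref{eq:26}. The extra component computation of $[\partial_{r},\vec{u}]^{b}$ and the caveat about normalizing $\partial_{r}$ are sound additions but not needed for the paper's term-by-term substitution.
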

\begin{eqnarray}
t & = & -(\lambda\delta u)dr+\mu(du^{r}+[\partial_{r},\vec{u}]^{b})\label{eq:26}\end{eqnarray}

\begin{proof}
We look at all the substitions one after the other. $-\lambda\delta un=-(\lambda\delta u)dr$
just by the definition $n=dr$ given above.

$d(\vec{u}\cdot n)=du^{r}$ follows from the fact that $\vec{u}\cdot n=\vec{u}\cdot dr=u^{r}$. 

$\vec{u}\cdot dn$ vanishes because $dn=ddr=0$.

$[\vec{n},\vec{u}]^{b}=[\partial_{r},\vec{u}]^{b}$ also by definition.
\end{proof}

\section{Conclusion }

In the first part of this article we have derived Cauchy Navier\textquoteright{}s
equation of static equilibrium in linear elasticity using only differential
forms instead of vector fields. The second part treats the harder
problem of expressing strain and stress tensors in a more modern differential
geometric way, and, finally, how to deal with traction boundary conditions.

\end{document}